\newtheorem{thm}{Theorem}
\newtheorem{exmp}{Example}[section]
\newtheorem{prop}{Proposition}
\newtheorem{rmk}{Remark}
\newcommand {\pnull}{P(H_0|\text{Data})}
\newcommand {\palt}{P(H_1|\text{Data})}
\newcommand {\bfactor}{\text{Bayes factor}}
\newcommand {\postnull}{P(\text{Data}|H_0)}
\newcommand {\postalt}{P(\text{Data}|H_1)}
\newcommand{\ee}{\mathbb{E}}
\newcommand{\rr}{\mathbb{R}}
\newcommand{\R}{\mathbb{R}}
\begin{document}

\begin{frontmatter}
\title{Bayesian Goodness of Fit Tests: A Conversation for David Mumford}
\runtitle{Bayes Test Goodness of Fit}

\begin{aug}
\author{\fnms{Persi} \snm{Diaconis}\thanksref{t1}\ead[label=e1]{diaconis@math.stanford.edu}},
\address{Department of Mathematics and Statistics\\
Stanford University\\
\printead{e1}}
\author{\fnms{Guanyang} \snm{Wang}\ead[label=e2]{guanyang@stanford.edu}}
\address{Department of Mathematics\\
Stanford University\\
\printead{e2}}

\thankstext{t1}{Research partially supported by NSF Grant DMS-1208775.}

\runauthor{Persi Diaconis and Guanyang Wang}

\end{aug}

\begin{abstract}
The problem of making practical, useful goodness of fit tests in the Bayesian paradigm is largely open. We introduce a class of special cases (testing for uniformity: have the cards been shuffled enough; does my random generator work) and a class of sensible Bayes tests inspired by Mumford, Wu and Zhu \cite{Zhu:1998:FRF:290091.290092}. Calculating these tests presents the challenge of `doubly intractable distributions'. In present circumstances, modern MCMC techniques are up to the challenge. But many other problems remain. Our paper is didactic, we hope to induce the reader to help take it further.
\end{abstract}

\begin{keyword}[class=AMS]
\kwd[Primary ]{}
\kwd{}
\kwd[; secondary ]{}
\end{keyword}

\begin{keyword}
\kwd{Bayes Test}
\kwd{MCMC}
\kwd{doubly intractable}
\end{keyword}
\end{frontmatter}

\section{Introduction}

Here is one of David Mumford's surprising accomplishments: he's managed to make the transition from abstract to applied mathematics, along the way becoming a card-carrying statistician. He tells his own version of the story in \cite{mumford2000dawning}. ``\,I also have to confess at the outset to the zeal of a convert, a born again believer in stochastic methods. Last week, Dave Wright reminded me of the advice I had given a graduate student during my algebraic geometry days in the 70's :``\,Good Grief, don't waste your time studying statistics. It's all cookbook nonsense.\,'' I take it back! ...\,''

To explain further, one of us (P.D.) has 50 years of trying to get great mathematicians fluent in the language of probability and statistics. See \cite{diaconis2009markov} for one of his efforts. David is the \textbf{only} success for these efforts. It really isn't easy to learn a new language, this same author has spent the past 30 years trying to `learn physics'. Despite small success (publishing in physics journals, teaching in the physics department), no one mistakes him for a physicist. This paper takes one of David's statistical discoveries further back into a basic problem of statistical inference.

\textbf{Our Problem}: A variety of `mixing schemes' are in routine use in applications. To be specific, consider mixing $52$ cards by a `wash' or `smoosh' shuffle. Here the cards are slid around on the table with both hands `for a while', gathered back together and used for a card game. How much mixing is enough? If the mixing isn't good, why not? This mixing scheme is used in EPSN Poker Tournaments, in California Card Rooms and in Monte Carlo for Baccarat. Similar mixing schemes are used with Dominoes and Mahjong Tiles.

Related Problems arise in lottery schemes for generating (say) a random $6$-set out of $\{1,2,\cdots, 75\}$. There, a `blower' randomizes $75$ ping pong balls 'for a while' and six are captured. Before each of the big draws, a limited sample is taken on the day of play (say $200$ $6$-sets) to check that the set-up hasn't been tampered with. Testing random number generators or the output of Monte Carlo algorithms present similar problems.

In general, let $\mathscr{X}$ be a finite set, let $X_1, X_2, \cdots X_N\in \mathscr{X}$ be a sample. We want to test if \{$X_i$\} are drawn from the uniform distribution $u(x)=\frac 1{|\mathscr{X}|}$. In our applications, $|\mathscr X|$ is large, e.g. $|\mathscr{X}|\doteq 8\times 10^{68}$ for cards, $|\mathscr{X}|\doteq 6\times 10^{8}$ for $6$-sets out of $75$. The sample size $N$ is small, e.g. N=200 or at most a few thousand.

It is natural to give up at this point, after all, most probably $X_1, X_2, \cdots X_N$ are distinct and you have $200$ balls dropped into $8\times 10^{68}$ boxes! But, if mostly, the original top card winds up close to the top or most of the $6$- sets contain low numbered balls, it would be poor statistics not to spot this. In other words, to make useful tests, you have to look at the actual practical problem, think about it, and design tests aimed at natural departures from uniformity.

The natural way to incorporate a-priori knowledge is Bayesian statistics. Alas, even in 2017, adapting the Bayesian picture to these problems is a research project. Section two below is a review, designed to bring readers from a variety of backgrounds up to speed. It reviews classical goodness of fit tests, using card shuffling data as illustration. It then reviews Bayesian testing, including `Lindley's Paradox' and the problems with naive, `flat prior' or `objective' Bayesian methods. Finally, needed standard exponential family and conjugate analysis are reviewed, along with the literature on maximum entropy modeling.

Section three presents our main contribution. It uses maximum entropy modeling as in \cite{Zhu:1998:FRF:290091.290092} to build low dimensional models through natural families of test statistics. Conjugate Bayesian analysis of these models is almost fully automated (with a simple `a priori sample size' parameter left to specify). The models include the uniform distribution as a special case ($\theta=0$). A variety of techniques for posterior inference are discussed. These include basic importance sampling to estimate normalizing constants, thermodynamic integration and other bridge sampling schemes. Along with asymptotic approximations (surprisingly useful in our class of examples). The literature here is vast and we attempt a tutorial with useful references.

Section four tries out the various Bayesian procedures on a card shuffling problem and draws some conclusions. As the reader will see, there is much left to do. We only hope that David will join the conversation.

\section{The Basics, A Review}
This section reviews classical topics in statistics. Section \ref{goodnessoffit} treats frequentist goodness of fit tests using shuffling data as illustration. Section \ref{bbayestests} treats Bayes tests, including the problems of classical p-values, Lindley's Paradox and a critique of `flat prior Bayes'. Section \ref{expotool} reviews needed exponential family theory including a description of the Mumford-Wu-Zhu \cite{Zhu:1998:FRF:290091.290092} work on maximum entropy modeling. If you lack statistical background, good references are the standard graduate texts \cite{cox1979theoretical} \cite{ferguson2014mathematical} \cite{lehmann2006testing} \cite{schervish2012theory}.

\subsection{Goodness of Fit Tests}\label{goodnessoffit}

Let $(\mathscr X ,\mathscr B)$ be a measurable space, $\{P_{\theta}\}_{\theta\in \Theta}$ a family of measures indexed by a set $\Theta$. We are given data $X_1, X_2, \cdots, X_N$ assumed to be independent and identically distributed from a fixed probability $Q$ on $\mathscr X$. We ask `Does the model fit the data?' More formally, a test is a function $T(X_1, X_2, \cdots, X_N)\in \{0,1\}$ with $T=1$ interpreted as `doesn't fit' and $T=0$ interpreted as `fits'. A test is consistent if 
\[  \lim_{N\to \infty} P(T_N=1) =1 \qquad \text{if}~~ Q\notin \{P_{\theta}\}_{\theta\in\Theta} \]

As an example, consider the first consistent non-parametric test, the Kolmogorov-Smirnov test. Here $\mathscr X=\R$ and $P_\theta=P$, a single fixed probability. Let $F_N(x)=\frac 1 N \#\{i: X_i\leq x\} $ be the empirical distribution function. Let $\|F_N-P\|=\text{sup}_{-\infty\leq x\leq \infty}|F_N(x)-P (-\infty,x)|.$ A test fixes $\epsilon>0$ and rejects iff $\|F_N-P\|>\epsilon$. Kolmogorov showed that, for any $\epsilon>0$, this test is consistent. Moreover, he gave an elegant way to make a choice of $\epsilon$ by determining good approximations to the distribution of $\|F_N-P\|$ when $Q=P$ and $N$ is large. The distribution of $\|F_N-P\|$ doesn't depend on $P$ and the surprising appearance of theta functions makes for an exciting mathematical story.

Nevertheless, real statisticians know that this is a poor test, it has no known optimality properties (indeed quite the opposite)(\cite{berk1979goodness}). Further, it can take huge sample sizes to reject. For example, D'Agostino and Stephens \cite{d1986goodness} report that if $P$ has the Cauchy density $\frac 1 {\pi}\frac 1{1+x^2}$ and $Q$ is standard normal (density $\frac{e^{-\frac{x^2}2}}{\sqrt{2\pi}}$ ) an order of $300$ samples are needed to detect this evident discrepancy. The standard `good tests' is the Anderson-Darling test. See \cite{d1986goodness} for much more statistical wisdom.

To discuss theory, it is useful to define the \textbf{power} of a test, the probability that $T=1$ given that $Q\notin \{ P_\theta\}$ and the \textbf{ significance level}, the probability that $T=1$ even though  $Q\in \{ P_\theta\}$. One wants tests with high power and small level. Often the  level is fixed at a standard rate e.g. $0.05$ and one seeks a most powerful test at this level. This is the start of statistical theory: Do such most powerful tests exist? Are there many such tests? If so, how can they be distinguished?  

Goodness of fit testing is a well known theoretical statistics nightmare. Indeed the standard `bible', Lehmann's 	`Testing Statistical Hypothesis' didn't include a chapter of goodness of fit testing until its third edition \cite{lehmann2006testing}. Lehmann told us there just wasn't enough theory available. This has changed somewhat due to work of Janson \cite{janssen1999testing} \cite{janssen2000global} \cite{janssen2003power}, summarized in \cite{lehmann2006testing}. But the bottom line is developing goodness of fit tests is much more art than science. The book \cite{d1986goodness} is a fine treatment of the art side.

\begin{exmp}[Smooshing Cards]

Of course, in a practical problem we are allowed to think! Consider mixing by `smooshing'. With  Lauren Banklader and Marc Coram, we gathered some data. A deck of 52 cards was mixed for $t$ seconds ($t=60, 30, 15$). After $t$ seconds, the cards were gathered back together and the current order read off. For each $t$, this was repeated $N$ times. Let us think together, if the cards are not `slid around' for long enough, why wouldn't they be random? We thought that there might be too many cards which were originally together that are still together. This suggested, for a permutation $\sigma$,
\[T(\sigma)=\#\{i, 1\leq i \leq n-1 : \sigma(i+1)=\sigma(i)+1\}.\]

Thus $T(\text{Id})=n-1$,  Some interesting probability theory  \cite{diaconis2014unseparated} shows that, under the uniform distribution, $T(\sigma)$ has an approximate Poisson($1$)  distribution, that is
\[P( T(\sigma)=j)\doteq\frac 1{ej!}\qquad 0\leq j\leq n-1\]
with good control on the error of approximation. From properties of the Poisson(1) distribution, $T(\sigma)$ has mean $1$, variance $1$ and usually $T(\sigma)$ is $0,1,2$. In the present case, $N=52$ (please do not be confused, for a fixed $t$, $52$ distinct permutations of a deck of $52$ cards were generated). This provides the following data (for $t=30$ seconds):
\begin{center}
\begin{tabular}{l*{6}{c}}
$T(\sigma)$            & 0 & 1 & 2 & 3 & 4  & 5  \\
\hline
Observation & 14 & 19 & 12 & 4 & 1 & 2  \\
Expectation & 19.1 &19.1 & 9.6 & 3.2 & 0.8 & 0.2  \\

\end{tabular}
\end{center}
The second row shows $\#\{\text{Data Points with } T(\sigma)=j\}$, the third row shows the Poisson($1$) approximation. The classical $\chi^2$ goodness of fit statistic is
\[
\chi^2=\sum_{j=0}^4(\text{Obs}-\text{Exp})^2/\text{Exp}=6.77.
\]
Here, in accordance with standard practice, categories $[5,\infty]$ are lumped together. Since the data are random, the statistic $\chi^2$ is itself a random variable. Standard theory shows, when $N$ is large, $\chi^2$ has a known limiting distribution, the $\chi_5^2$ ($\chi^2$ distribution with $5$ degrees of freedom). The upper $0.05$ point of this is $11.07$  and a standard test is `reject' if $\chi^2>11.07$, since here $\chi^2=6.77$, the test doesn't reject the null hypothesis. For this example, the `p-value' of the test is $0.248$.
\end{exmp}
In further data analysis of the smooshing data, a variety of test statistics were used. These included position of original top (bottom) card, distance of $\sigma$ from original order in standard metrics\dots For each, the distribution of $T(\sigma)$ under the null hypothesis is derived (exactly or approximately). Comparison of the empirical distribution and theory is made. This raises well studied problems of multiple testing \cite{shaffer1995multiple}. We will return to this example below and conclude with a brief summary. The data for $t=60$ seconds showed no discernible difference from uniformity. The data for $t=30$ seconds was `on the edge' but adjusting for multiple testing, the uniformity assumption is accepted. For $t=15$ seconds, all tests reject uniformity.

All the probability calculations reported above are classical frequentist, thus $p=0.05$ can be interpreted as ``in many independent repetitions, the test will reject the (true) null hypothesis approximately $\frac 1{20}$ times.'' These p-values are under wide spread attack by the statistical community. Spirited discussion of the problems of p-values from a Bayesian perspective are in \cite{sellke2001calibration}, \cite{berger1988statistical} which are recommended reading.

\subsection{Bayes Tests}\label{bbayestests}

To fix ideas, let $\mathscr X$ be a finite set, $\mu$ a probability on $\mathscr X$ and $X_1,X_2\cdots X_N$ independent, identically distributed points from a probability $Q$. One wants to test
\[
H_0: Q=\mu \qquad  \text{vs} \qquad H_1:Q\neq \mu.
\]

Following the classical work of Harold Jeffreys \cite{jeffreys1998theory}, one Bayesian procedure puts a prior probability $P(H_0)$ on the null hypothesis, a probability $P(H_1)$ on $H_1$ and computes the posterior odds $P(H_0|\text{Data})$/$P(H_1|\text{Data})$ and if this is large (e.g. $\geq 20$), the Bayes test accepts the null hypothesis, otherwise the null hypothesis is rejected. Using Bayes theorem
\begin{align*}
\frac{\pnull}{\palt}&=\frac{P (\text{Data}|H_0)}{P (\text{Data}|H_1)}\frac{P (H_0)/P(\text{Data})}{P (H_1)/P(\text{Data})}
=\frac{P (\text{Data}|H_0)}{P (\text{Data}|H_1)}\frac{P (H_0)}{P (H_1)}\\
&=\text{Bayes factor}\times \text{Prior odds}
\end{align*}

Often, for simplicity, the prior odds are taken as $1$ and the Bayes test only depends on the Bayes factor. Of course, one issue is that $P(\cdot)$ must be specified on $H_1$. There are a host of `standard convenience priors' in wide spread use \cite{berger2009formal} discussed further below.

A standard statistical mantra says `with a lot of data, Bayes and frequentist procedures agree'. A classical example of Jeffreys \cite{jeffreys1998theory} shows that there can be discrepancies.
\begin{exmp}[Jeffreys' Paradox]

Consider flipping a $\theta$-coin $n$ times and testing if it is fair $\theta=\frac 12$. The standard frequentist test rejects if $|S_n-\frac n 2|> c\sqrt n$ for appropriate $c$.

Consider a Bayes test with $H_0: \theta=\frac 12$, $H_1: \theta\neq \frac 12$. We could put probability $\frac 12 $ on both $H_0$ and $H_1$. Under the alternative $H_1$, the uniform prior $\text{U}[0,1]$ is put on $\theta$. Then if the data is $S_n=j$, $\bfactor=\frac{{n\choose j}/2^n}{\frac 1 n}$. Using the local Central Limit Theorem, for $j$ close to $\frac n 2$,
\[
\bfactor=\frac n {2^n}{n\choose j}\approx n\frac{e^{-\frac 12 (j-\frac n2)^2/n/4}}{\sqrt{\frac n 4 2\pi}}=c\sqrt n e^{-\frac 12 (j-\frac n2)^2/n/4}
\]

for explicit c.

The test rejects if 
\[
-\frac  {(j-\frac n2)^2}{n/4}+\frac 12 \log n 
\]

is large, e.g. $|j-\frac n 2|>c'\sqrt{n\log n}+c''$

This can make for surprising differences between frequentists and Bayesians as the following example (drawn from Wikipedia) shows.
\end{exmp}

\begin{exmp}[Lindley's Paradox]
In a certain city, $59,581$ boys and $48,870$ girls have been born in a certain time period. The observed proportion of male births is $0.5036$, the frequentist test of $\theta=\frac 12$ based on the normal approximation to the binomial rejects the null hypothesis at the $5\%$ level. The Bayes test has $\pnull=0.95$.
\end{exmp}

\begin{exmp}[The Problem with Flat Priors] Consider our original problem of testing a shuffling procedure for uniformity. Then $\mathscr X$  is all $n!$ permutations and $\mu(\sigma)=\frac 1 {n!}$ is the uniform distribution. The alternative is $H_1=\{\text{All other probabilities on} \mathscr{X}\}$. A standard prior on $H_1$ is the uniform distribution on the simplex of dimension $n!-1$. This is the Dirichlet$(1,1,\cdots,1)$ distribution with $(1,1, \cdots ,1 )$ a vector of $1$'s of length $n!$. The data is $\sigma_1,\sigma_2\cdots \sigma_N$ and in any real situation all of the $\sigma_i$ will be distinct. Then the Bayes factor is 
\[
\frac{\postnull}{\postalt}=(\frac 1 {n!})^N \frac{(n!-1)!}{((n+N)!-1)!}\approx 1
\]
Standard asymptotics shows that e.g. if $n=52$ and $N$ is moderate (e.g. a few thousand), the Bayes factor is $1$ to good approximation. Thus, any data will fail to reject the null hypothesis, so long as there are no repeated values. The same conclusion holds for symmetric Dirichlet priors. The situation is worse for improper priors (e.g. Jeffreys' prior, propotional to $Pod_{\sigma\in S_{n}} x_{\sigma}^{-1}$). For then the posterior remains improper and formally $\postalt =\infty$, so the Bayes factor is zero.
\end{exmp}
The problem is that the uniform distribution on a high dimensional space is not easy to understand (without experience). Consider the uniform distribution on the $n!$ simplex and look at the induced probability of the first coordinate $\sigma(1)$. Here $\sigma(1)$ can take values $1,2\cdots n$ so that we have a probability on the $n$ simplex. Standard properties of the Dirichlet distribution show that
\begin{prop}
Under the uniform distribution on the $n!$ simplex, the induced distribution of $\sigma(1)$ is Dirichlet$_n((n-1)!,(n-1)!,\cdots, (n-1)!)$.
\end{prop}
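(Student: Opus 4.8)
The statement is an aggregation (or \emph{lumping}) property of the flat Dirichlet, so the first task is to pin down what ``the induced distribution of $\sigma(1)$'' means. Write the random probability vector as $(x_\sigma)_{\sigma\in S_n}$, distributed as $\text{Dirichlet}(1,1,\ldots,1)$ with $n!$ ones, i.e. uniform on the simplex. Reading off the image of $1$ turns this random probability on $S_n$ into a random probability on $\{1,2,\ldots,n\}$, where the mass at a value $k$ is the aggregated coordinate
\[
Y_k=\sum_{\sigma:\,\sigma(1)=k} x_\sigma,\qquad k=1,2,\ldots,n.
\]
The goal is to show $(Y_1,\ldots,Y_n)\sim\text{Dirichlet}_n\big((n-1)!,\ldots,(n-1)!\big)$.

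The plan is to use the Gamma representation of the Dirichlet, which makes the additivity of shape parameters under blocking completely transparent. Let $\{G_\sigma\}_{\sigma\in S_n}$ be independent $\text{Gamma}(1)$ (standard exponential) random variables, and set $S=\sum_{\tau\in S_n}G_\tau$. Then $x_\sigma=G_\sigma/S$ realizes the uniform distribution on the simplex, and consequently
\[
Y_k=\frac{1}{S}\sum_{\sigma:\,\sigma(1)=k}G_\sigma=\frac{T_k}{\sum_{j=1}^n T_j},\qquad T_k:=\sum_{\sigma:\,\sigma(1)=k}G_\sigma.
\]

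The key combinatorial step is the block count: for each fixed $k$, the number of permutations with $\sigma(1)=k$ is exactly $(n-1)!$, since the remaining images form an arbitrary bijection of $\{1,\ldots,n\}\setminus\{k\}$ onto $\{2,\ldots,n\}$. Because the $n$ index blocks $\{\sigma:\sigma(1)=k\}$ partition $S_n$, the variables $T_1,\ldots,T_n$ are independent, and each $T_k$ is a sum of $(n-1)!$ independent $\text{Gamma}(1)$ variables, hence $T_k\sim\text{Gamma}\big((n-1)!\big)$. Normalizing a vector of independent Gammas with shapes $\alpha_1,\ldots,\alpha_n$ yields $\text{Dirichlet}(\alpha_1,\ldots,\alpha_n)$, so $(Y_1,\ldots,Y_n)\sim\text{Dirichlet}_n\big((n-1)!,\ldots,(n-1)!\big)$, as claimed.

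I expect no genuinely hard step here; the content is entirely in correctly identifying $Y_k$ as a block sum and in the count $(n-1)!$. The only point needing care for a fully self-contained argument is the standard fact that normalized independent Gammas give a Dirichlet, proved by the change of variables to total mass and proportions (the total $S$ is independent of $(Y_1,\ldots,Y_n)$, and the Jacobian reproduces the Dirichlet density). I would present the Gamma-representation version rather than invoking the aggregation property as a black box, since it turns the whole claim into the elementary additivity $\sum_{\sigma:\sigma(1)=k}1=(n-1)!$ of the shape parameters and avoids any simplex-integration bookkeeping.
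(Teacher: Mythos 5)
Your proof is correct and rests on the same fact the paper invokes: the paper simply cites ``standard properties of the Dirichlet distribution,'' meaning exactly the aggregation property you use, with the blocks $\{\sigma:\sigma(1)=k\}$ of size $(n-1)!$ turning the flat Dirichlet$(1,\ldots,1)$ on $S_n$ into Dirichlet$_n((n-1)!,\ldots,(n-1)!)$. Your Gamma-representation argument just makes that cited property self-contained, so there is no substantive difference in approach.
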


The induced measure is super-concentrated about $(1/n,1/n,\cdots, 1/n)$ and it would take a huge amount of data to change this. This can be quantified via result in \cite{diaconis1991closed}. Let $Q$ be a random distribution on $\{1,2,\cdots, n\}$ from $D_n((n-1)!,(n-1)!,\cdots, (n-1)!)$. Let $x_{\star}=(1/n,1/n,\cdots, 1/n)$. Then we have $\ee\| Q-x_{\star}\|=\frac 1 {(n-1)!} \frac{\Gamma(n!)}{\Gamma((n-1)(n-1)!)}(\frac{1}{n})^{(n-1)!}(1-\frac 1n)^{(n-1)(n-1)!}$. The same concentration is forced on many other features (e.g. the cycle and descent structures of $\sigma$). We conclude that the uniform doesn't represent anyone's real prior and should be avoided. For several related calculations of strange behavior under the uniform measure on large finite sets, see \cite{diaconis2002bayesian}.

There is some further literature on Bayesian testing, see \cite{verdinelli1998bayesian} and their references. These ideas can be applied to the discrete problems studied here by using an orthonormal basis for $L^2(\mathscr X)$, often available using group representation theory, see \cite{diaconis1989generalization}. Alas, we have not found any suggestions that help with the problems considered here. The methodology put forward in section three use standard machinery of exponential families and conjugate priors. We turn to this next.

\subsection{Exponential Tools}\label{expotool}
Exponential families are families of probability densities proportional to $\{e^{\theta\cdot T(x)}\}_{\theta\in \Theta} $ (more careful definitions in a moment). They capture commonly used statistical models (normal, Poisson, multinomial \dots). Usually, $\Theta\subset \rr^d$ and $T: \mathscr X\rightarrow \rr ^d$ is called a sufficient statistics in the statistics literature. They often turn up in the statistical physics literature in form $e^{-\beta H(x)}$ with $H(x)$ called `energy'. Two basic motivations for their study: First, if $X_1, X_2\cdots X_N$ is a sample from such a family, the data can be `compressed' to $\bar T=\frac 1 N \sum_{i=1}^N T(X_i)$. Up to minor technical assumptions, exponential families are the only families admitting such reductions (the Koopman-Pitman-Darmois Theorem). Second, often $P_\theta(X\in A | T(X)=t)$ is uniform on $\mathcal{X}$, independent of $\theta$. Thus these families can be said to make minimal assumptions beyond having $T$ as sufficient statistic. Otherwise, measuring the entropy of a density as $-\int f(x)\log f(x) dx$, exponential models can be shown to be `maximum entropy with given sufficient statistic'. Or, if $\theta$ is chosen so $\ee_\theta(T(x))=\mu_0$ has a prescribed value. For example, the $N(\mu_0,\sigma^2)$ distribution has the maximum entropy among all measures on $\rr$ with variance $\sigma^2$.

These properties make an attractive package for building models and many statistical schools espouse  and develop these ideas. The work of Martin-L\"of/ Lauritzen \cite{lauritzen2012extremal} and Jaynes \cite{jaynes1957information} has been particularly influential. We were inspired to relook at exponential families because of an innovative study of Mumford, Wu, and Zhou \cite{Zhu:1998:FRF:290091.290092}. They built models for textured patterns in image analysis by taking a host of features of a pattern $T(x)=(T_1(x), T_2(x), \cdots, T_d(x))$ with $T_i$ summary statistics drawn from engineering experience (e.g. If $x$ is an $n\times m$ array of pixels, $T_1(x)$ might be density, $T_2(x)$ the number of horizontally adjacent $(1,1)$ occurrences (including Fourier coefficients). They then fit a model of form $e^{\theta\cdot T(x)}$, estimated $\theta$ and then used samples from the fitted model to generate textured patterns. The results were surprisingly realistic and this effort has been refined and amplified to a powerful modern suite of algorithms. See \cite{lu2015learning}.

When we first saw this work, our reaction was `The nerve of them, this can't possibly work. Just take a large collection of features, stick them up in the exponent without worrying about redundancy of correlations and use standard statistics as if it was a real model!' Their success encouraged us and this kind of brazen modeling is exactly what we propose in the next section.

With these preambles, it is time to get back to mathematics. Any standard graduate text treats exponential families, the books of Barndorff-Nielsen \cite{barndorff2014information}, Brown \cite{brown1986fundamentals}, and Letac \cite{letac1992lectures} are focused texts. The following account is drawn from \cite{diaconis1979conjugate} where more details can be found. Further references to conjugate priors and many examples are in \cite{diaconis2008gibbs}.

Let $(\mathscr X, \mathscr B)$ be a measurable space, $T: \mathscr{X}\rightarrow \rr^d$ a measurable function. Let $\mu(dx)$ be a $\sigma$- finite measure on $(\mathscr X, \mathscr B)$ and let
\[
\Theta=\{\theta\in \rr^d : \int e^{\theta\cdot T(x)}\mu(dx)<\infty\}.
\]
Assume throughout that $\Theta$ is non empty and open. H\"older's inequality implies that $\Theta$ is convex, the natural parameter space. Set $m(\theta)=\log \int e^{\theta\cdot T(x)}\mu(dx)$. The exponential family through $T$ (and $\mu$) is:
\[
P_\theta(dx)=e^{\theta\cdot T(x)-m(\theta)}\mu(dx) \qquad \theta\in \Theta.
\]

Allowable differentiation implies the useful formulas
\begin{align}
& \ee_\theta(T)=\nabla m(\theta)\\
&\text{Cov}_\theta(T)=(\frac{\partial ^2 m(\theta)}{\partial \theta_i\theta_j})_{1\leq i,j\leq d}
\end{align}

\textbf{The Conjugate Prior}. The conjugate prior for $P_\theta(dx)$ has the form
\begin{align}
\pi_{n_0,x_0}(d\theta)= e^{n_0 x_0\cdot \theta-n_0m(\theta)}d\theta
\end{align}
where $d\theta$ is Lebesgue measure. Here $n_0>0$ and $x_0\in \rr^d$ are called the prior sample size and prior mean. In \cite{diaconis1979conjugate} it is shown that if $n_0>0$, $x_0\in \{\text{Interior of  the convex hull of Supp}(\mu^{\intercal^{-1}})\}$. $\pi_{n_0,x_0}$ is a proper probability and $Z(n_0,x_0)$ is a finite normalizing constant. It is further shown that
\begin{align}
\ee_{n_0,x_0}(\nabla m(\theta))=x_0
\end{align}

Thus $x_0$ is the apriori mean of $\ee_\theta(T)$. Bayes theorem shows that the posterior distribution of $\theta$ given $X_1,\cdots , X_N$ is again in the conjugate family with parameters $(n_0+N, \frac{n_0}{n_0+N}x_0+\frac N {n_0+N}\bar{T})$. This linearity is shown to characterize conjugacy. Finally $\pi$ is a log-concave measure centered at its mean, so one has chosen a rough picture of the prior and posterior.

\section{A Practical Bayes Test for Uniformity}\label{sec3}

Let $\mathscr X$ be a finite set, $T_1, T_2,\cdots, T_d$ functions $T_i: \mathscr X\rightarrow \rr$. For $\theta\in \rr^d$, set 
\begin{align}
P_\theta(x)=e^{\sum_{i=1}^d \theta_i T_i(x)-m(\theta)}
\end{align}

a probability density with respect to counting measure. Thus $m(\theta)=\log \sum_x e^{\theta\cdot T(x)}$. Observe that $\theta=0$ gives the uniform distribution on $\mathscr X$. Let $K$ be the interior of the convex hull of $\{T(x)\}_{x\in\mathscr X}$ in $\rr^d$. If the vectors $\{T(x)\}_{x\in \mathscr X}$ are linearly independent, this is a non-empty open set. Otherwise, restrict to a smaller set of $T_i$.

For $n_0>0, x_0\in K$, set 
\begin{align}
\tilde{\pi}_{n_0,x_0}(d\theta)= Z^{-1}e^{n_0 x_0\cdot \theta-n_0m(\theta)}d\theta
\end{align}
a probability density on $\Theta=\rr^d$. To test for uniformity, we center our prior at $x^\star=\ee_0(T)=\nabla m(0)$, this leaves only $n_0$ to be specified. We suggest five possibilities. 
\begin{itemize}
\item One may simply choose $n_0=1$. 

\item Use posterior linearity to make a reasoned choice of $n_0$ along the lines of the device of the imaginary results `if my sample of size $N$ results in $\bar{T}^\star$. I would use $\frac{n_0}{n_0+ N}x_0^\star+\frac N {n_0+N}\bar{T}^\star$'. 

\item A third possibility, widely used (see Good \cite{good1965estimation} for extensive discussion) is to carry $n_0$ along as a parameter and plot the final result as a function of $n_0$. 

\item A fourth approach: $n_0$ can be estimated as usual in empirical Bayes procedures. 

\item The fifth possibility, evaluate or approximate the result as $n_0\to 0$, this is a version of a non-informative prior. The evaluation leads into the rich waters of the geometry of convex polyhedra, see Letac and Massam \cite{letac2012bayes} for remarkable developments in the parallel problem of contingency tables.
\end{itemize}

To compute a Bayes factor, $$\postalt = \frac{Z(n_0+N, \frac{n_0}{n_0+W}x_0^\star+\frac W {n_0+W}\bar{T})}{Z(n_0,x^\star)}$$ must be evaluated. This would usually be done by MCMC samples from the posterior. Since the posterior is conjugate, this is the same problem as sampling from the prior. Another route to drawing conclusions is to simply see if the posterior is supported close to $\theta=0$.

\textbf{Evaluation of Bayes factors:} This is a standard topic in Bayesian statistics. The Wikipedia entry for `Bayes factors' contains a wealth of philosophical as well as theoretical references. The paper by Kass and Raftery \cite{kass1995bayes} is recommended. For our problems, Markov Chain Monte Carlo will often be the default choice. The papers by Meng and Wong \cite{meng1996simulating} and Gelman and Meng \cite{gelman1998simulating} contain useful literature reviews.

One widely used estimator for $\hat P(\text{Data}|H_1)$ , applied below, is the harmonic mean estimator. Draw $\theta_1, \theta_2, \cdots, \theta_M $ from a Markov Chain with stationary distribution the posterior density $\pi^\star(\theta)=P(\theta|\text{Data})$ and use \[
\hat P(\text{Data}|H_1)=\{\frac 1M \sum_{i=1}^M P(\text{Data}|\theta_i)^{-1}\}^{-1}\]

then the Bayes factor can be estimated by 
\[
\hat{\text{BF}}=\frac{ P(\text{Data}|H_0) P(H_0)}{\hat P(\text{Data}|H_1) P(H_1)}
\].

The harmonic mean estimator is easy to calculate and converges almost surely to $ P(\text{Data}|H_1)$. However, it might have high variance and thus not stable, for further discussions, see \cite{kass1995bayes}.

\textbf{Doubly Intractable Priors:} The problem is now manifest: how can we draw from the posterior and evaluate $\hat{\text{BF}}$, when we don't know $m(\theta)$? Of course, there are many ways of sampling from a measure when the normalizing constant is unknown. Here, $m(\theta)$ is a crucial part of the prior and posterior. This has been dubbed the problem of doubly intractable prior distributions \cite{murray2012mcmc}. These authors have proposed an interesting new algorithm, the `exchange algorithm' which we will illustrate in \ref{sec4}. However they involve the ability to draw perfect samples (as per the Propp-Wilson algorithm), we don't see how to do this in our applications for general cases.

In the examples below, we use the straightforward procedure of estimating $m(\theta)$. There are several available routes illustrated below: 
\begin{itemize}
\item Estimate $m(\theta)$ by importance sampling (e.g. for a grid of $\theta$'s) and then interpolate for other values of $\theta$.
\item Estimate $\nabla m(\theta)=\ee_\theta(T(X))$ by sampling from $\pi_\theta$ (e.g. via Metropolis) and then use thermodynamic integration $m(\theta)=\int_0^\theta \nabla m(s) ds +m(0)$, with $m(0)$ available because $P_0$ is uniform.
\item Estimate $Z(\theta)=\sum_t e^{\theta\cdot t}\#\{x: T(x)=t\}$ via exact (or approximate) evaluation of the combinatorial quantity $\#\{x: T(x)=t\}$, we use this in the following section.

\item Of course, there are other approaches, see \cite{kass1995bayes}.
\end{itemize}

\section{A Simple Example}\label{sec4}
This section uses a different card shuffling scheme (random transpositions) to illustrate the Bayesian methodology of Section \ref{sec3}. It begins with motivation and the data then carries out the tests. Following this, a detailed exposition of the exchange algorithm is presented.

The usual method for generating a random permutation on a computer involves picking $I_i$ $(i\leq I_i\leq n)$, and transposing cards $i$ and $I_i$ for $1\leq i\leq n-1$, one of us had a computer programmer who ``made things more random'' by simply making repeated random transpositions. This scheme was also used in the first algorithms for computer generated bridge hands. In each case, the desk size was $52$. The programmer used $100$ random transpositions. The bridge league used $60$ random transpositions. What's the right answer?

A careful mathematical analysis \cite{diaconis1981generating}, \cite{saloff2007convergence} allows the following theorem: let
\[
Q(\sigma)=
\begin{cases}
    \frac 1n,& \text{if } \sigma=\text{Id}\\
    \frac 2{n^2},& \text{if } \sigma=(i,j)\\
    0.  & \text{otherwise} 
\end{cases}
\]
Let $Q^{\star k}(\sigma)=\sum_{\eta}Q(\eta)Q^{\star k-1}(\sigma\eta^{-1})$, let $U(\sigma)=\frac 1 {n!}$

\begin{thm}
For any $n\geq 3$ and $c>0$, let $k= \left \lfloor \frac 12 n\log n+cn \right \rfloor,$ then 
\[
\|Q^{\star k}-U\|\leq 2e^{-c}
\]

with $\|Q^{\star k}-U\|=\text{max}_{A\in S_n} |Q^{\star k}(A)-U(A)|$.
\end{thm}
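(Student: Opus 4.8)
The measure $Q$ is exactly the random transposition walk on the symmetric group $S_n$, and the natural tool is non-commutative Fourier analysis, following the Diaconis--Shahshahani analysis cited in \cite{diaconis1981generating}. The plan is to bound the total variation distance by the Upper Bound Lemma and then observe that the stated quantity is dominated by it: for any single $A\in S_n$, taking the singleton $\{A\}$ in the variational characterization of total variation gives $|Q^{\star k}(A)-U(A)|\le \|Q^{\star k}-U\|_{\mathrm{TV}}$, so $\max_{A\in S_n}|Q^{\star k}(A)-U(A)|\le \|Q^{\star k}-U\|_{\mathrm{TV}}$. Hence it suffices to prove $\|Q^{\star k}-U\|_{\mathrm{TV}}\le 2e^{-c}$, the classical statement.

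The Upper Bound Lemma asserts
\[
4\,\|Q^{\star k}-U\|_{\mathrm{TV}}^2\le \sum_{\rho\neq \mathrm{triv}} d_\rho\,\mathrm{Tr}\!\left[\widehat{Q}(\rho)^{k}\big(\widehat{Q}(\rho)^{k}\big)^{*}\right],
\]
the sum running over the non-trivial irreducible representations $\rho$ of $S_n$, with $\widehat{Q}(\rho)=\sum_\sigma Q(\sigma)\rho(\sigma)$ and $d_\rho=\dim\rho$. The decisive simplification is that $Q$ is a class function: it is constant on the two conjugacy classes (the identity and the transpositions) that it charges. By Schur's lemma each $\widehat{Q}(\rho_\lambda)$ is a scalar matrix $r_\lambda I_{d_\lambda}$, where, indexing irreducibles by partitions $\lambda\vdash n$,
\[
r_\lambda=\frac1n+\frac{n-1}{n}\,\frac{\chi_\lambda\big((12)\big)}{d_\lambda}.
\]
I would evaluate the character ratio by the Frobenius content formula, $\chi_\lambda((12))/d_\lambda=\binom{n}{2}^{-1}\sum_{(i,j)\in\lambda}(j-i)$; in particular the standard representation $\lambda=(n-1,1)$ has $d_\lambda=n-1$ and $r_\lambda=1-\tfrac2n$, which is the second-largest eigenvalue and fixes the scale $\tfrac12 n\log n$ of the mixing.

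Since every $r_\lambda$ is real, the Upper Bound Lemma collapses to
\[
4\,\|Q^{\star k}-U\|_{\mathrm{TV}}^2\le \sum_{\lambda\neq (n)} d_\lambda^2\, r_\lambda^{2k}.
\]
Plugging $k=\FL{\tfrac12 n\log n+cn}$ into the $(n-1,1)$ term gives $(n-1)^2(1-\tfrac2n)^{2k}\le (n-1)^2 e^{-4k/n}\approx e^{-4c}$, the leading contribution. The main obstacle is to show that all remaining partitions contribute only lower-order terms, so that the whole sum is $O(e^{-4c})$ uniformly in $n$. This is the combinatorial heart of the argument: I would group the partitions by $j=n-\lambda_1$ (their distance from the trivial shape), use the content formula to show the eigenvalue decreases with this distance (one computes, for instance, $r_{(n-2,2)}=(1-\tfrac2n)^2$ and $r_{(n-2,1,1)}=1-\tfrac4n$, suggesting a bound of the form $r_\lambda\le 1-\tfrac{2(n-\lambda_1)}{n}$ when $\lambda_1\ge n/2$, with a crude bound such as $|r_\lambda|\le \tfrac12$ once $\lambda_1$ is small), and bound the dimensions within each group by $\binom{n}{n-\lambda_1}$-type estimates sharpened by the hook-length formula. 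Each group then contributes of order $\tfrac{1}{(j!)^2}e^{-4jc}$ up to polynomial factors, and summing this geometric-type series controls the tail, yielding $\|Q^{\star k}-U\|_{\mathrm{TV}}\le \tfrac12 e^{-2c}(1+o(1))$; since $\tfrac12 e^{-2c}\le 2e^{-c}$ for all $c\ge 0$, the stated (in fact weaker, sup-norm) bound $2e^{-c}$ follows. Handling the dimension-versus-eigenvalue trade-off uniformly over all $\sim p(n)$ partitions is exactly where the delicate estimates of \cite{diaconis1981generating} lie.
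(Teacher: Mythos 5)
First, a point of comparison: the paper does not prove this theorem at all; it is quoted from the random-transpositions analysis of Diaconis and Shahshahani \cite{diaconis1981generating} (see also \cite{saloff2007convergence}). Your route is exactly the route of that source: the Upper Bound Lemma, Schur's lemma reducing $\widehat{Q}(\rho_\lambda)$ to a scalar $r_\lambda$ because $Q$ is a class function, and the Frobenius content formula for the character ratio. Your concrete computations are all correct: $r_{(n-1,1)}=1-\tfrac{2}{n}$, $r_{(n-2,2)}=\left(1-\tfrac{2}{n}\right)^2$, $r_{(n-2,1,1)}=1-\tfrac{4}{n}$, and the $(n-1,1)$ term of the sum is indeed at most $e^{-4c}$ when $k=\lfloor \tfrac12 n\log n+cn\rfloor$. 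Reducing the paper's sup-norm statement to a total variation bound is also fine.

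The gap is that everything after ``the main obstacle'' is a plan, not a proof, and it is precisely there that the theorem lives. Two concrete problems. (i) The inequality you propose as the engine of the tail bound, $r_\lambda\le 1-\tfrac{2(n-\lambda_1)}{n}$ for $\lambda_1\ge n/2$, is already false for $\lambda=(n-2,2)$: by your own computation $r_{(n-2,2)}=\left(1-\tfrac{2}{n}\right)^2=1-\tfrac{4}{n}+\tfrac{4}{n^2}>1-\tfrac{4}{n}$. The correct monotonicity statement (the Diaconis--Shahshahani character-ratio lemma) is genuinely weaker, and its proof by induction on the shape via the content formula, combined with hook-length dimension estimates, constitutes the bulk of \cite{diaconis1981generating}; it cannot be waved through with ``$\binom{n}{n-\lambda_1}$-type estimates.'' (ii) The theorem is asserted for every $n\ge 3$ and every $c>0$, so a conclusion of the form $\|Q^{\star k}-U\|_{\mathrm{TV}}\le \tfrac12 e^{-2c}(1+o(1))$ does not suffice: you need the full sum over all (roughly $p(n)$ many) partitions bounded by an explicit universal constant times $e^{-4c}$, uniformly in $n$, before the final comparison $\tfrac12 e^{-2c}\le 2e^{-c}$ can be invoked. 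As it stands, the proposal correctly reconstructs the skeleton of the cited argument but defers its entire load-bearing part, and the one quantitative lemma it guesses for that part is wrong.
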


This shows order $\frac 12 n\log n $ steps are sufficient for a demanding notion of mixing. When $n=52$, $\frac 12 n\log n\doteq 103$. Suppose we want the right side of the bound less than $\frac 1 {1000}$, this needs $k\doteq 480$. There are lower bounds for the theorem too, showing that order $\frac 12 n\log n$ shuffles are needed but these are not close enough to the upper bounds to give useful numerics when $n=52$.

The arguments that follow, which use our approach to Bayesian hypothesis testing show, in a fairly sharp sense, that $k\doteq 180$ is necessary and sufficient. We have $\mathscr X=S_n$, the symmetric group with $n=52$. For fixed $k$, the data are $X_1, X_2, \cdots, X_N$ the results of $k$-random transpositions starting at the identity. In our experiments we took $N=200$ (Later we give results for larger N). Think about the problem de novo: The cards start face down in a row on the table, in order $1,2,\cdots, 52$ left to right. At each step, the left hand touches a card, the right hand touches a card and the cards are transposed. Suppose not enough transpositions have been made. Why wouldn't the resulting permutation be random? A natural answer: There are too many cards left untouched. The resulting permutation has too many fixed points.

In this example, this heuristic can be seen to be very sharply correct. In the original proof, the number of fixed points gave the matching lower bound. Much more refined justifications are available due to work of Schramm, Berestycki and others, see \cite{berestycki2011mixing}, for extensive discussion, of course, in real problems, such heuristics are all we have. Let us follow this up and see where it leads.

Let $F(\sigma)= \# \{i: \sigma(i)=i\}$, the number of fixed points of $\sigma$. A classical theorem (de Montmort) says, under uniformity $F(\sigma)$ has an approximate Poisson(1) distribution:
\[
P(F(\sigma)=j)\doteq \frac 1{ej!} \qquad 0\leq j \leq n.
\]

Following the program of Section \ref{sec3}, set
\[
P_\theta(\sigma)=Z^{-1}(\theta)e^{\theta F(\sigma)}\qquad -\infty<\theta<\infty.
\]

and consider a Bayes test of \[H_0: \theta=0 \qquad \text{vs} \qquad H_1: \theta\neq 0.\] 

For this example, simple combinatorics give $Z(\theta)$:
\[
Z(\theta)=n!\sum_{j=1}^n \frac{(e^\theta-1)^j}{j!}.
\]

But we will not use this below (since it is usually not available). To carry out a Bayes test, a prior for $\theta$ is needed, call it $\pi(\theta)$. Following standard practice, this is centered at $\theta=0$: $\int_{-\infty}^{\infty} \theta \pi(\theta)=0$. In the numerical examples below, we choose a normal prior $N(0,\alpha)$. The conjugate prior can be used in the same way. As usual, the exact choice of the prior doesn't matter in any major way.

Let us begin with the results: Figure \ref{fig:shuffles_comparison} 
shows the histogram of $F(\sigma)$ for 200 randomly generated Poisson(1) random variables, along with the results for 200 data points with $k= 100, 120, 140, 160, 180$ respectively. Figure \ref{fig:frequency_comparison} shows the corresponding frequency comparison between real Poisson and the transposition data. In Figure \ref{fig:shuffles_comparison}, for example, it is obvious from the plot that after $100$ shuffles, only a very small part of the data points have $0$ fix points, which suggests that $100$ shuffles is not enough. However, when the number of shuffles has increased to 160 or 180, then though the histogram and frequency plots (the second row in Figure \ref{fig:shuffles_comparison} and Figure \ref{fig:frequency_comparison}) still do not look exactly the same, we can not confidently reject the null hypothesis (the deck is uniformly shuffled confidently) visually. 

Therefore we carried out the Bayes test, basically we followed the method mentioned in Section \ref{sec3}. For each $k=100, 120, 140, 160, 180$, we take $\pi(\theta)\sim N(0,0.1)$ as the prior and run a $1,000$ step MCMC (using the exchange algorithm, explained below) with a burn-in period of $200$ for $20$ times to sample from the posterior $p(\theta|\text{Data})$, each time we use the harmonic mean estimator (see \cite{kass1995bayes}) to estimate the posterior probability of the null hypothesis $P (H_0|\text{Data})$ and hence estimate the Bayes factor. The following table shows the estimated Bayes factor for different $k$.

\begin{center}
\begin{tabular}{l*{5}{c}}
$k$            & 100 & 120 & 140 & 160 & 180  \\
\hline
Bayes factor & 0 &  0.051 & 5.024 & 18.608 & $+\infty$   \\

\end{tabular}
\end{center}

It turns out that when $k\leq 120$, $P (H_0|\text{Data})\leq 0.05$ (when $k=100$, $P (H_0|\text{Data})$ is around $10^{-19}$ ) and the corresponding Bayes factor is much less than $1$, which means $120$ shuffles is not enough to make the deck of cards uniform. $k=140$ gives $P (H_0|Data)\approx 0.834$ and $k=160$ gives $P (H_0|Data)\approx 0.949$, which means the posterior is slightly in favor of the null-hypothesis. When $k=180$, $P (H_0|Data)$ is essentially $1$, which shows very strong evidence in favor of the null hypothesis.

\begin{figure}[!h]
\centering
\includegraphics [width=0.9\textwidth]{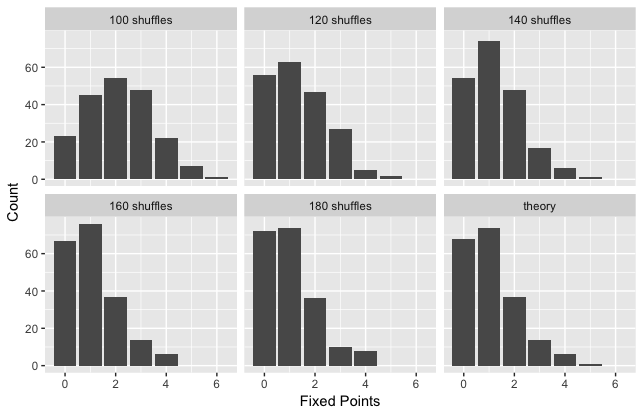}
\caption{\label{fig:shuffles_comparison}Histogram of real Poisson and the transposition data} 
\end{figure}

\begin{figure}[!h]
\centering
\includegraphics[width=0.9\textwidth]{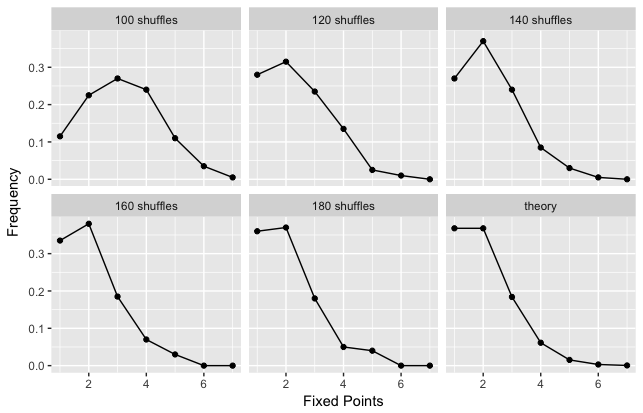}
\caption{\label{fig:frequency_comparison} Frequency comparison between real Poisson and the transposition data} 
\end{figure}

Let us then do the details: since we assumed that the normalizing constant $Z(\theta)$ is unknown, we use Murray's exchange algorithm to sample from the posterior \cite{murray2012mcmc}. In generic form, this begins with a likelihood $p(y|theta)=\frac{f(y|\theta)}{Z(\theta)}$, where $Z(\theta)$ is the unknown normalizing constant, a prior density $p(\theta)$. It runs a Metropolis type Markov Chain on $\Theta$ that begins with a proposal $q(\theta'|\theta)$. From $\theta$:
\begin{itemize}
\item Propose $\theta'\sim q(\theta'|\theta)$
\item Generate the auxiliary $w \sim \frac{f(w|\theta')}{Z(\theta')}$
\item Compute $a=\frac{q(\theta|\theta')p(\theta')f(y|\theta')f(w|\theta)}{q(\theta'|\theta)p(\theta)f(y|\theta)f(w|\theta')}$
\item If $a\leq 1$, go to $\theta'$, if $a<1$ flip an a-coin, if heads go to $\theta'$, else stay at $\theta$.
\end{itemize}

The final factor $\frac{f(w|\theta)}{f(w|\theta')}$ is an unbiased estimator of $\frac{Z(\theta)}{Z(\theta')}$, when $w$ is sampled from $\frac{f(w|\theta')}{Z(\theta')}$, as usually needed for Metropolis. It is not obvious, but this algorithm gives a correct method of sampling from the posterior. The references we looked at did not present a proof of correctness, we include one for correctness.

\begin{prop}
The exchange algorithm above generates a reversible Markov Chain with station distribution the posterior $p(\theta|y)=p(\theta)\frac{f(y|\theta)}{Z(\theta)}$. 
\end{prop}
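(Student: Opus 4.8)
The plan is to reduce the statement to verifying detailed balance for the marginal chain on $\theta$, treating the auxiliary variable $w$ not as part of the state but as an integration variable. Write the posterior as $\pi(\theta) \propto p(\theta)f(y|\theta)/Z(\theta)$, where the (unknown) normalizing constant $C=\int p(\theta)f(y|\theta)/Z(\theta)\,d\theta$ will turn out to be irrelevant. For $\theta'\neq\theta$ the one-step kernel produced by the algorithm is
\[
K(\theta,\theta') = \int q(\theta'|\theta)\,\frac{f(w|\theta')}{Z(\theta')}\,\min(1, a(\theta,\theta',w))\,dw,
\]
since $\theta'$ is drawn from $q(\cdot|\theta)$, the auxiliary $w$ from $f(\cdot|\theta')/Z(\theta')$, and the move accepted with probability $\min(1,a)$. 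The goal is to show $\pi(\theta)K(\theta,\theta')=\pi(\theta')K(\theta',\theta)$; reversibility and stationarity then follow immediately, and the holding mass on the diagonal satisfies detailed balance trivially.

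First I would record the two structural facts that drive the proof. Writing out the definition of $a$, one sees at once that swapping the roles of $\theta$ and $\theta'$ while keeping the \emph{same} $w$ inverts the ratio, i.e. $a(\theta',\theta,w)=1/a(\theta,\theta',w)$. Second, and this is the crux, I would multiply the forward integrand by $a$ and watch the intractable constants cancel:
\[
\pi(\theta)\,q(\theta'|\theta)\,\frac{f(w|\theta')}{Z(\theta')}\,a(\theta,\theta',w) = \pi(\theta')\,q(\theta|\theta')\,\frac{f(w|\theta)}{Z(\theta)}.
\]
This is a pointwise (in $w$) identity: after substituting $\pi(\theta)\propto p(\theta)f(y|\theta)/Z(\theta)$ and cancelling $p(\theta)f(y|\theta)$, $q(\theta'|\theta)$, and $f(w|\theta')$, the factors $1/Z(\theta)$ and $1/Z(\theta')$ appear symmetrically and both survive on the right. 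It is worth emphasizing that the cancellation is \emph{exact}; the earlier remark that $f(w|\theta)/f(w|\theta')$ is an unbiased estimator of $Z(\theta)/Z(\theta')$ is only the intuition, and no appeal to an ``in expectation'' pseudo-marginal argument is required.

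Given these two facts, detailed balance drops out of the elementary identity $A\min(1,a)=(Aa)\min(1,1/a)$, valid for all $A\ge 0$ and $a>0$ (check the cases $a\le 1$ and $a>1$ separately). Applying it with $A=\pi(\theta)q(\theta'|\theta)f(w|\theta')/Z(\theta')$, using the displayed identity to rewrite $Aa$ as $\pi(\theta')q(\theta|\theta')f(w|\theta)/Z(\theta)$, and using $1/a=a(\theta',\theta,w)$, yields
\[
\pi(\theta)\,q(\theta'|\theta)\,\frac{f(w|\theta')}{Z(\theta')}\,\min(1,a(\theta,\theta',w)) = \pi(\theta')\,q(\theta|\theta')\,\frac{f(w|\theta)}{Z(\theta)}\,\min(1,a(\theta',\theta,w))
\]
pointwise in $w$. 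The left-hand weight $f(w|\theta')/Z(\theta')$ is exactly the density from which $w$ is drawn in $K(\theta,\theta')$, and the right-hand weight $f(w|\theta)/Z(\theta)$ is the one appearing in $K(\theta',\theta)$; integrating both sides over $w$ therefore gives $\pi(\theta)K(\theta,\theta')=\pi(\theta')K(\theta',\theta)$, which is detailed balance, and hence $\pi$ is stationary and the chain reversible.

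For insight I would also state the conceptual picture in one line: the whole scheme is ordinary Metropolis--Hastings on the enlarged target $\bar\pi(\theta,\theta',w)\propto \pi(\theta)\,q(\theta'|\theta)\,f(w|\theta')/Z(\theta')$ using the deterministic, involutive exchange proposal $(\theta,\theta',w)\mapsto(\theta',\theta,w)$, whose Hastings ratio is precisely $a$; the $\theta$-chain is the first marginal, and $\bar\pi$ marginalizes to $\pi$ because both $q(\cdot|\theta)$ and $f(\cdot|\theta')/Z(\theta')$ are normalized densities. The main obstacle to keep straight is purely bookkeeping: $w$ is generated from $\theta'$ on the forward move but from $\theta$ on the reverse move, so the detailed-balance identity must be arranged to hold \emph{pointwise} in the shared dummy $w$ rather than after any integration. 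Once the reciprocity $a(\theta',\theta,w)=1/a(\theta,\theta',w)$ is in hand and the exact cancellation of the $Z$'s is observed, the rest is routine.
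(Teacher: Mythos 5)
Your proof is correct and takes essentially the same route as the paper's: both verify detailed balance for the marginal $\theta$-chain by integrating the acceptance probability against the auxiliary density $f(w|\theta')/Z(\theta')$, with the intractable constants cancelling exactly inside the minimum. Your pointwise identity $A\min(1,a)=(Aa)\min(1,1/a)$ together with the reciprocity $a(\theta',\theta,w)=1/a(\theta,\theta',w)$ is just an explicit restatement of the paper's observation that $\frac{1}{Z(\theta)Z(\theta')}\min\{q(\theta|\theta')p(\theta')f(y|\theta')f(w|\theta),\,q(\theta'|\theta)p(\theta)f(y|\theta)f(w|\theta')\}$ is symmetric in $\theta$ and $\theta'$.
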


\begin{proof}
Let $m(\theta,\theta')$ be the probability density of moving from $\theta$ to $\theta'$ in one step. We show \[
p(\theta|y)m(\theta,\theta')=p(\theta'|y) m(\theta', \theta) \]

for all $(\theta,\theta')$.

Writing out the left side, for $\theta\neq \theta'$, the left side equals:
\begin{align*}
&\frac{p(\theta)f(y|\theta)q(\theta'|\theta)}{Z(\theta)}\int \text{min}\{\frac{q(\theta|\theta')p(\theta')f(y|\theta')f(w|\theta)}{q(\theta'|\theta)p(\theta)f(y|\theta)f(w|\theta')}, 1\}\frac{f(w|\theta')}{Z(\theta')} dw\\
&=\frac{1}{Z(\theta)Z(\theta')}\int \text{min}\{q(\theta|\theta')p(\theta')f(y|\theta')f(w|\theta),q(\theta'|\theta)p(\theta)f(y|\theta)f(w|\theta')\}dw
\\
\end{align*}
This expression is symmetric in $\theta, \theta'$, and so equal to $p(\theta'|y)m(\theta',\theta)$.
\end{proof}

\begin{rmk}
Of course, ergodicity of the associated chain must be checked separately. There are a host of variants of Metropolis algorithms in use, see \cite{Billera2001A}. It is not clear if any of these variants work if the trick of replacing $\frac{Z(\theta)}{Z(\theta')}$ by $\frac{f(w|\theta)}{f(w|\theta')}$ with $w$ chosen from $\frac{f(w|\theta')}{Z(\theta')}$ is used. As far as we know, there has been no study of rates of convergence for any instance of these algorithms.
\end{rmk}

A crucial step of the algorithm, is the second step, sampling once from $p(w|\theta')$. The authors of \cite{murray2012mcmc} suggest perfect sampling. If this is available, wonderful. Often it will not be (as in the present case). What we did was write 
\[
P(\sigma|\theta')=Z(\theta')^{-1}e^{\theta'F(\sigma)}
\]
Since the sufficient $F(\sigma)$ is a multinomial distributed random variable which satisfies:
\[
P(F(\sigma)=j|\theta')=e^{\theta'\cdot j}\frac{\#\{\sigma: F(\sigma)=j\}}{52!}\approx e^{\theta' j}/(ej!)
\]
Then we could sample from the multinomial distribution of $F(\sigma)$ given fixed $\theta'$.

For $N=2,000$, a similar exponential family testing is also implemented. Here when k = 160, the Bayes factor $\frac{P(H_0|\text{Data})}{P(H_1|\text{Data})}$ is around $7.285\times 10^{-7}$, when k = 180, the Bayes factor is estimated as $2257.588$, when k = 200, the Bayes factor is estimated as $3.511 \times 10^{37}$. 

To conclude this section, for comparison we carry out a more classical Bayesian analysis for the same choices of $k$ when $N=2,000$.  We use a conjugate prior $\Gamma(\alpha,\alpha)$ for Poisson distribution to test if the fix points data satisfies Poisson(1). Here we take the prior $\Gamma(\alpha,\alpha)$ to enforce the prior has mean $1$, since we are testing if the Possion$(\lambda)$ distribution has parameter $\lambda=1$. Figure \ref{fig:BF_160},\ref{fig:BF_180}, and \ref{fig:BF_200} are the corresponding Bayes factors as a function of $alpha$. For $k=160, 180, 200$, a sharp transition occurs, for $k=160$, the Bayes factors are tiny which suggests $160$ shuffles is not enough. For k=180, the Bayes factor stay bounded and makes mixing plausible. For k=200, Bayes factors are large.

\begin{figure}[!h]
\centering
\includegraphics[width=0.8\textwidth]{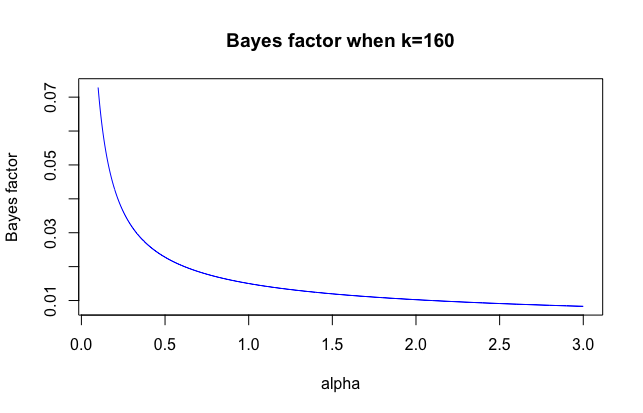}
\caption{\label{fig:BF_160}Bayes factor as a function of $\alpha$ when k=160} 
\end{figure}

\begin{figure}[!h]
\centering
\includegraphics[width=0.8\textwidth]{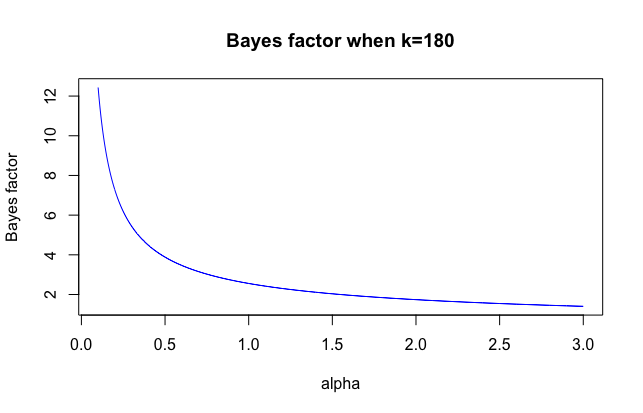}
\caption{\label{fig:BF_180}Bayes factor as a function of $\alpha$ when k=180} 
\end{figure}

\begin{figure}[!h]
\centering
\includegraphics[width=0.8\textwidth]{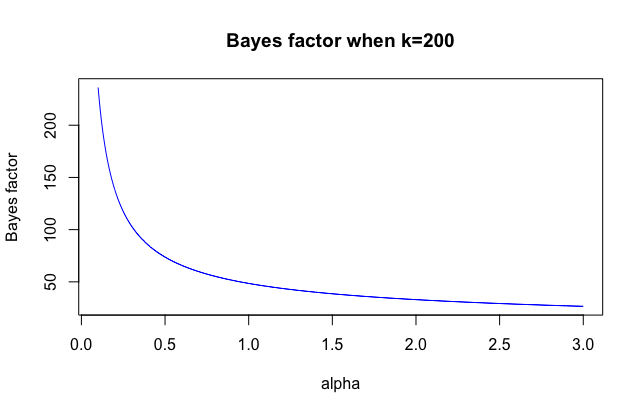}
\caption{\label{fig:BF_200}Bayes factor as a function of $\alpha$ when k=200} 
\end{figure}

\textbf{Main Contribution} The main contribution of this paper is calling attention to the many examples where a similar program can be carried out. Even if $\mathscr X$ is huge, if $T(X)$ takes a limited number of values, and if an analytic approximation is available for $\#\{x: T(x)=t\}$, then the program can go forward.

Here is a second example: for the wash shuffle example, say $T=(T_1,T_2,T_3)$ with $T_1(\sigma)=\#\{i: \sigma(i+1)=\sigma(i)+1\}$ (number of adjacent values), $T_2(\sigma)=\sigma^{-1}(1)$ (position of original top card), $T_3(\sigma)=\sigma^{-1}(n)$ (position of original bottom card). Under uniformity, $T_1\approx$ Poisson(1), $T_2\approx $ Uniform, $T_3\approx $ Uniform. With $T_1, T_2, T_3$ approximately independent, we plan to carry out the analysis in later work.

\section*{Acknowledgements}
The authors thank Paulo Orenstein, Sergio Bacallado, Susan Holmes and Joe Blitzstein for many helpful comments.
\bibliographystyle{alpha}
\bibliography{di}

\end{document}